\documentclass[preprint,12pt]{elsarticle}
\usepackage[margin=1.05in]{geometry}

\usepackage{graphicx}      % include this line if your document contains figures
\usepackage{natbib}        % required for bibliography
\usepackage[english]{babel}
\usepackage{amsthm}
\usepackage{amsmath}
\usepackage{amssymb}
\usepackage{enumitem}
\usepackage{booktabs}
\usepackage{float}
\usepackage{subcaption}
\usepackage{tikz}
\usepackage{mathtools}
\usetikzlibrary{positioning, graphs,quotes}
\usepackage{booktabs}
\usepackage{tabularx}
\allowdisplaybreaks[2]
\newtheorem{theorem}{Theorem}%[section]
\newtheorem{lemma}{Lemma}%[section]
\newtheorem{remark}{Remark}%[section]
\newtheorem{assumption}{Assumption}

\begin{document}

\begin{frontmatter}

\title{\textbf{Event-Triggered Pinning Impulsive Control of Complex Networks with Actuation Delays: Stability Analysis and Zeno-Free Conditions}\tnoteref{label0}}
% Title, preferably not more than 10 words.

\tnotetext[label0]{This work was supported by the Natural Sciences and Engineering Research Council of Canada (NSERC). K. Zhang further acknowledges support from the Verner Family Faculty Fellowship in Smith Engineering at Queen's University.}

\author[a]{Ethan Astri}\ead{22kc41@queensu.ca}
\author[b]{Hamza Saleem}\ead{24sm13@queensu.ca}
\author[a,b]{Kexue Zhang\corref{cor1}}\ead{kexue.zhang@queensu.ca}
\cortext[cor1]{Corresponding author.} 

\address[a]{Department of Mathematics and Statistics, Queen's University, Canada}
\address[b]{Department of Electrical and Computer Engineering, Queen's University, Canada}

\begin{abstract}                % Abstract of 50--100 words

This paper investigates the stabilization of complex networks via event-triggered pinning impulsive control in the presence of actuation delays. Unlike existing event-triggered impulsive control schemes that assume instantaneous implementation, the proposed framework explicitly accounts for the delay between event detection and impulse execution. By constructing suitable Lyapunov functions and analyzing the network dynamics during the delay intervals, explicit delay-dependent sufficient conditions are derived to guarantee asymptotic stability. The obtained conditions characterize the interplay among network topology, actuation delays, impulsive control gains, and triggering parameters. In addition, a strictly positive lower bound on inter-event times is established, which excludes Zeno behavior and ensures practical implementability. A topology-based criterion for selecting pinned nodes is also developed through a spectral condition on the network Laplacian. Numerical simulations on a network of coupled Chua circuits illustrate the design procedure and verify the effectiveness of the proposed method.

\end{abstract}

\begin{keyword}
Pinning control,
Impulsive control,
Event-triggered control,
Actuation delays,
Complex networks,
Asymptotic stability,
Zeno behavior

\end{keyword}
% 5 to 10
\end{frontmatter}

\section{Introduction}
Complex networks have attracted significant attention in recent decades due to their broad applications in engineering, physics, biology, communication systems, and social sciences. Many real-world interconnected systems, including power grids, neural networks, multi-agent systems, sensor networks, and cyber-physical systems, can be modeled within the framework of complex networks (see, e.g., \cite{StructCNs, SyncCDNS}). Because of the strong coupling and nonlinear interactions among network nodes, the analysis and control of collective dynamical behaviors such as synchronization, consensus, and stabilization have become important research topics in nonlinear science and control theory (see, e.g., \cite{TANG2014184}).

Among the many available control approaches, impulsive control has proven effective and economical for controlling complex networks (see, e.g., \cite{SyncCDNS,TANG2014184,Kexue}). Unlike continuous feedback control, impulsive control modifies the system state only at discrete time instants, thereby significantly reducing communication, computation, and actuation costs (see \cite{liu2019}). In practical applications, impulsive control is particularly attractive for large-scale networked systems, where continuous communication among sensors, controllers, and actuators may be difficult or expensive to maintain. Existing impulsive control strategies can generally be classified as time-triggered or event-triggered. In time-triggered impulsive control, impulses are applied periodically or according to a prescribed schedule. Although such methods are relatively simple to implement, they may introduce unnecessary control updates when the system's state evolves near the desired equilibrium. Event-triggered impulsive control addresses this issue by generating impulses only when a certain triggering condition is violated, thereby improving communication efficiency and reducing control frequency (see, e.g., \cite{LI2020,StabStoc,SyncAnal,Quasi}).

Another important issue in controlling complex networks is the high dimensionality of large-scale systems. In many applications, directly controlling every node in a network is impractical or prohibitively expensive. Pinning control has emerged as an efficient approach to address this challenge by applying control inputs to only a subset of network nodes while utilizing the network coupling structure to propagate stabilization or synchronization throughout the entire network (see, e.g., \cite{PinCDNs,DELELLIS}). {{Combining pinning strategies with impulsive and event-triggered mechanisms offers a promising framework for efficient control of large-scale nonlinear networks at reduced control cost.}}

Recently, event-triggered impulsive pinning control has received increasing attention. Existing work has established various synchronization and stabilization criteria for complex networks under impulsive and event-triggered control mechanisms (see, e.g., \cite{PIS,1ImpCtrl,SHEN2022,LIN2022}). However, most available results assume that control impulses are implemented instantaneously upon satisfying a triggering condition. In practical networked and cyber–physical systems, this assumption is often unrealistic due to communication latency, actuator dynamics, signal transmission delays, packet scheduling, and computational overhead. As a result, there may exist a non-negligible delay between the triggering instant and the actual execution of the impulsive control input. Such actuation delays may significantly affect network dynamics and, if not properly accounted for, can destroy stability.

Compared with delays arising from system dynamics or coupling channels, actuation delays pose additional analytical challenges because the system continues to evolve after an event is triggered but before the impulsive control is applied. As a result, the mismatch between the triggering instant and the control execution instant may allow the network state to evolve significantly during the delay interval, potentially resulting in transient growth and substantially affecting the triggering behavior. Unlike standard delayed systems, the network state keeps evolving during the interval between event detection and impulsive implementation, leading to transient growth phenomena and complicated interactions among network coupling, triggering thresholds, and delayed impulsive corrections. These features make the stability analysis much more difficult and require new delay-dependent estimation techniques. Although impulsive control and event-triggered control with delays have been investigated in several settings (see, e.g., \cite{XL-JC-XL-MR:2023,CY-RG-JC-XY:2023,JC-JY-ZY-CJ:2026,DL-XX-VS:2026}), the problem of event-triggered pinning impulsive control for complex networks with \textit{actuation delays} remains largely open. 

Motivated by these observations, this paper investigates the stabilization problem for complex networks under event-triggered pinning impulsive control with actuation delays. Different from existing studies that neglect implementation delays, we explicitly incorporate the delay between event detection and impulsive execution into the control framework. By carefully analyzing the system's evolution during delay intervals, we derive explicit delay-dependent sufficient conditions guaranteeing asymptotic stability of the network. In addition, a strictly positive lower bound on inter-event times is established to exclude Zeno behavior and ensure practical implementability of the proposed triggering mechanism. Compared with existing delay-free event-triggered impulsive control frameworks, the proposed method explicitly captures the influence of actuation delays and provides a unified analysis framework for delayed impulsive network control systems. 

The principal novelty is that the actuation delay is incorporated directly into the event-triggered impulsive mechanism rather than being modeled as an ordinary system delay. Actuation delay fundamentally changes the hybrid dynamics because event generation and control execution are separated. This requires explicitly estimating the state evolution during the interval between event detection and impulse execution, which cannot be handled using existing delay-system techniques. A detailed comparison with several recent and closely related works is provided in the preliminary section, alongside the formulation of the network control model, highlighting the key differences and the additional analytical challenges addressed in this paper.

The remainder of this paper is organized as follows. Section~\ref{Sec2} introduces the problem formulation. Section~\ref{Sec3} presents the main stability results with the detailed proofs. Numerical examples are given in Section~\ref{Sec4} to demonstrate the effectiveness of the proposed approach. Finally, conclusions and future research directions are discussed in Section~\ref{Sec5}.

\textit{Notation.} Let $\mathbb{R}$ denote the set of real numbers. 
Let $\mathbb{R}^n$ and $\mathbb{R}^{n \times n}$ denote $n$-dimensional and $n \times n$-dimensional real spaces equipped with the Euclidean norm and the spectral norm respectively, both denoted by $\| \cdot \|$. 
Let $\mathbb{R}^+$ denote the non-negative real numbers and let $\mathbb{N}$ denote the non-negative integers. 
Denote by $I_n \in \mathbb{R}^{n \times n}$ the $n \times  n$ identity matrix. 
For matrices $A \in \mathbb{R}^{m \times n}$ and $B \in \mathbb{R}^{p \times q}$, let $A^{T}$ denote the transpose of $A$, and let $A \otimes B \in \mathbb{R}^{mp \times nq}$ denote their Kronecker product. 
Let $\lambda_{max}(\cdot)$ denote the maximum eigenvalue of a symmetric matrix. 
Let $\delta(\cdot)$ denote the Dirac delta function.

\section{Preliminaries}\label{Sec2}
Consider a complex network with $N$ identical nodes
\begin{equation} \label{eq:basic-network}
	\dot{x}_i(t) = f(t, x_i(t)) + c \sum_{j=1}^{N} a_{ij}\Gamma (x_j(t) - x_i(t)),
\end{equation} % Not sure if we really need this or can just skip to (2)
where $x_i \in \mathbb{R}^n$ is the state of the $i$th node, $f: \mathbb{R}^+ \times \mathbb{R}^n \rightarrow \mathbb{R}^n$ is continuously differentiable and $f(t,0)=0$ for all $t\in\mathbb{R}^+$, constant $c > 0$ is called the coupling strength, and $\Gamma$ is positive definite and is called the inner coupling matrix. 
The symmetric matrix $A = (a_{ij})_{N \times N}$ represents the network topology and has off diagonal entries $a_{ij}$ positive if there is an edge between nodes $i$ and $j$ and zero otherwise. 
Diagonal entries satisfy $a_{ii} = -\sum_{j=1, j \ne i}^{N} a_{ij}$, thus the sum of the entries in each row is zero.
Then network \eqref{eq:basic-network} is equivalent to 
\begin{equation}
	\dot{x}_i(t) = f(t, x_i(t)) + c \sum_{j=1}^{N} a_{ij} \Gamma x_j(t).
\end{equation}

We next make the following assumptions on $f$.

\begin{assumption}\label{as:lipschitz}
	There is a constant $L \in \mathbb{R}$ such that for all $x,y \in \mathbb{R}^n$, 
    $$\|f(t,x) - f(t,y)\| \le L \| x-y \|.$$
\end{assumption}
\begin{assumption} \label{as:matrix}
	There exists a constant matrix $K$ such that for all $x,y \in \mathbb{R}^n$, 
    $$(x-y)^T(f(t,x) - f(t,y)) \le (x-y)^TK\Gamma(x-y).$$
\end{assumption}

The goal of this work is to stabilize the network by applying control impulses to only some nodes in the network. 
We assume $l$ nodes receive impulses and the other $N-l$ nodes do not. 
We can relabel the nodes in the network so that nodes 1 to $l$ receive impulses. 
Then we can write the dynamics of the controlled network as follows
\begin{align} 
	\dot{x}_i(t) &= f(t, x_i(t)) + c \sum_{j=1}^{N} a_{ij} \Gamma x_j(t) + u_i(t), \quad i = 1, \dots, l, \nonumber \\
	\dot{x}_i(t) &= f(t, x_i(t)) + c \sum_{j=1}^{N} a_{ij} \Gamma x_j(t), \quad i = l + 1, \dots, N,
    \label{eq:network}
\end{align}
where 
\begin{equation}
	u_i(t) = -\sum_{k \in \mathbb{N}} d_ix_i(t^{(i)}_k)\delta(t - t_k^{(i)} - \tau) \quad i = 1, \dots, l,
\end{equation}
and $d_i \in \mathbb{R}$ are the impulsive control gains, $\tau$ is the actuation delay, and $\{t_k^{(i)}\}_{k \in \mathbb{N}}$ is a sequence of event times for node $i$ to be determined for updating the impulse control input. Then $\{t_k^{(i)}+\tau\}_{k \in \mathbb{N}}$ is the sequence of impulse times for executing the impulses. Between impulses, the state dynamics of each pinned node can be described as follows

\begin{align}\label{eq:error-system}
    \dot{x}_i(t) &= f(t, x_i(t)) + c \sum_{j=1}^{N} a_{ij} \Gamma x_j(t), \quad t \ne t_k^{(i)} + \tau, \cr
    \Delta x_i(t+\tau) &= -d_ix_i(t), \quad t = t_k^{(i)}, \quad i = 1, \dots, l \text{ and } k \in \mathbb{N}, \cr
    \dot{x}_i(t) &= f(t, x_i(t)) + c \sum_{j=1}^{N} a_{ij} \Gamma x_j(t), \quad i = l+1, \dots, N,
\end{align}
where $\Delta x_i(t) = x_i(t^+) - x_i(t^-)$ is the jump in the state caused by an impulse at time $t$ and $x_i(t^+)$ and $x_i(t^-)$ are right and left limits of $x_i$ at time $t$. 
We will assume that all states are left continuous so $x_i(t)=x_i(t^-)$.

Now we define a Lyapunov function $V(t) = \sum_{i=1}^{N} V_i(t)$ with $V_i(t) = \|x_i(t)\|^2$.
We construct the sequences $\{t_k^{(i)}\}_{k \in \mathbb{N}}$ to trigger the impulse control update based on the exponential decay threshold $\alpha_i \exp(-\beta(t - t_0))$ for $t \ge t_0$ where $\alpha_i$ and $\beta$ are positive constants by defining 
\begin{equation}\label{trigger}
	t_{k+1}^{(i)} = \inf \left\{ t>t_k^{(i)} + \tau : V_i(t) \ge \alpha_i\exp(-\beta(t-t_0)) \right\}, \quad i = 1, \dots, l.
\end{equation}
When $\tau=0$, the triggering condition in \eqref{trigger} reduces to the delay-free event-triggered pinning control scheme studied in~\cite{Kexue}, where convergence of the pinned-node states is guaranteed whenever $V_i(t_0)<\alpha_i$ and $d_i\in(0,1)$. However, these results do not directly extend to the case of actuation delays. Indeed, when $\tau>0$, the Lyapunov function $V_i$ may exceed the triggering threshold during the interval $(t_k^{(i)},t_k^{(i)}+\tau)$ before the corresponding impulse is applied, so convergence can no longer be guaranteed by the triggering condition alone. Moreover, although Zeno behavior was excluded in~\cite{Kexue} by proving $\lim_{k\to\infty} t_k^{(i)}=\infty$, no explicit lower bound on the inter-event times was obtained. The objective of this work is therefore to derive delay-dependent conditions ensuring asymptotic stability in the presence of actuation delays, while simultaneously establishing a positive lower bound on inter-event times.

Compared with the delay-free event-triggered pinning control framework in~\cite{Kexue}, the analysis in this paper is substantially more challenging. In the delay-free case, the impulsive control action is applied immediately once the triggering condition is satisfied, so the triggering mechanism directly regulates the Lyapunov function of each controlled node. In contrast, when actuation delays are present, the system continues to evolve between the triggering instant and the execution of the corresponding impulse. Consequently, the state may experience transient growth during the delay interval, and the stabilizing effect of the impulsive control can no longer be characterized solely by the triggering condition. Additional estimates are therefore required to bound the state evolution during the delay interval and to guarantee that the contraction induced by the delayed impulses dominates the transient growth of the network states.

\begin{remark}

The recent paper~\cite{DL-XX-VS:2026} considers neural networks with impulse delays and employs a Lyapunov-based event-triggered pinning strategy in which the pinned nodes are dynamically selected according to the largest synchronization errors. In contrast, the proposed approach uses node-wise event-triggering conditions, where the triggering condition of each pinned node depends only on its own synchronization error. Consequently, different pinned nodes may generate different impulse sequences, and not all pinned nodes need to be controlled simultaneously. Moreover, the proposed method provides a topology-based criterion for pinning-node selection. These features make the control strategy more distributed and easier to implement in large-scale networks.
\end{remark}

\section{Main Results}\label{Sec3}

This section develops the key estimates required for the stability analysis. To simplify the presentation of the subsequent lemmas and the main theorem, we first introduce the following notation and auxiliary constants.

Let
\[
W(t)=\sum_{i=l+1}^{N}V_i(t),
\]
and define
\[
\hat{\alpha}=\max_{1\le i\le l}\left\{\alpha_i\right\},
\qquad
\check{\alpha}=\min_{1\le i\le l}\left\{\alpha_i\right\}.
\]
Writing the coupling matrix \(A\) in the block form
\[
A=
\begin{bmatrix}
A_{11} & A_{12}\\
A_{21} & A_{22}
\end{bmatrix},
\]
where $A_{11}\in\mathbb{R}^{l\times l}$ and the other matrix blocks have appropriate dimensions, we introduce
\[
\Lambda=\gamma I_{N-l}+cA_{22},
\]
where \(\gamma=\|K\|\). The constants \(\eta_0,\eta_1,\eta_2\), and \(\eta_3\) defined below provide upper bounds for several intermediate estimates appearing in the main results.
\begin{align*}
    \eta_0 &= \left(\frac{2 c \|A_{12}\| \|\Gamma\| l \sqrt{\kappa\hat{\alpha}}}{2\lambda_{max}(\Lambda \otimes \Gamma) + \beta}\right)^2\cr
    \eta_1 &= l\kappa\hat{\alpha} + \eta_0 \cr
    \eta_2 &= 2 \gamma \|\Gamma\| \kappa\hat{\alpha} + 2\sqrt{2} c \|\Gamma\| \max\limits_{1 \le i \le l}\left\{|a_{ii}|\right\} \sqrt{\kappa\hat{\alpha}\eta_1}\cr
    \eta_3 &= \frac{2}{\beta\sqrt{\mu\check{\alpha}}}\left(L\sqrt{\kappa\hat{\alpha}} + \sqrt{2}c\max\limits_{1\le i\le l}\{|a_{ii}|\}\|\Gamma\| \sqrt{\eta_1}\right) 
\end{align*}
where $\mu$ and $\kappa$ are positive constants.

According to the event-triggering mechanism defined in~\eqref{trigger}, the Lyapunov function $V_i$ may exceed the threshold $\alpha_i e^{-\beta (t-t_0)}$ between two consecutive triggering instants. Therefore, the Lyapunov function associated with the pinned nodes is not necessarily constrained below the triggering threshold for all $t\ge t_0$. Nevertheless, the following lemma shows that both the network Lyapunov function $V$ and the derivative of the Lyapunov functions $V_i$ corresponding to the pinned nodes converge exponentially to zero as $t\to\infty$, provided that
\[
V_i(t)\le \kappa\alpha_i e^{-\beta (t-t_0)}
\]
for all $t\ge t_0$, where $\kappa\ge 1$ is a prescribed constant. The quantity $\kappa\alpha_i e^{-\beta (t-t_0)}$ can be regarded as an enlarged threshold obtained by scaling the triggering condition in~\eqref{trigger}.

\begin{lemma} \label{lm:bound}
Suppose $\kappa\geq 1$, $W(t_0) \le \eta_0$, and $\beta+2\lambda_{max}(\Lambda\otimes \Gamma)<0$.
	If 
    \[
    V_i(t) \le \kappa \alpha_i \exp(-\beta (t-t_0)) \textrm{ for } t \ge t_0,
    \]
    where $i = 1, \dots, l$, then 
    $$V(t) \le \eta_1 \exp(-\beta (t-t_0)),$$
	and
    $$\dot{V}_i(t) \le \eta_2 \exp(-\beta (t-t_0)) \textrm{ for } i = 1, \dots, l.$$
\end{lemma}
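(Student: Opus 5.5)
Our plan is to first bound the unpinned part $W$ by a differential inequality driven by the pinned nodes. Then we add the pinned-node bounds to get $V$, and finally estimate $\dot V_i$ directly from the node dynamics.

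\textbf{Step 1: a bound on $W$.} The unpinned nodes never receive impulses, so $W$ is continuous and differentiable for all $t\ge t_0$. Write $y=(x_{l+1}^T,\dots,x_N^T)^T$ and $z=(x_1^T,\dots,x_l^T)^T$. Using Assumption~\ref{as:matrix} with $y=0$ and $f(t,0)=0$, we get
\[
\dot W = 2\sum_{i>l} x_i^T f(t,x_i) + 2c\, y^T(A_{22}\otimes\Gamma)y + 2c\, y^T(A_{21}\otimes\Gamma)z .
\]
The nonlinear term is at most $2\sum_{i>l}x_i^TK\Gamma x_i$. We then try to absorb it together with the coupling term into $2y^T(\Lambda\otimes\Gamma)y\le 2\lambda_{max}(\Lambda\otimes\Gamma)W$, using $\gamma=\|K\|$. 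This step needs care: it implicitly requires $x^TK\Gamma x\le \gamma\, x^T\Gamma x$, which holds when $K$ and $\Gamma$ commute, or when $K$ is interpreted as a scalar-type bound. We will adopt the reading under which this inequality holds.

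For the cross term we use $\|A_{21}\|=\|A_{12}\|$, which follows from the symmetry of $A$. With $\|z\|\le\sum_{i\le l}\|x_i\|\le l\sqrt{\kappa\hat\alpha}\,e^{-\beta(t-t_0)/2}$, this gives
\[
\dot W\le 2\lambda W+2c\|A_{12}\|\|\Gamma\|\, l\sqrt{\kappa\hat\alpha}\,e^{-\beta(t-t_0)/2}\sqrt W,\qquad \lambda:=\lambda_{max}(\Lambda\otimes\Gamma).
\]

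\textbf{Step 2: the comparison argument (main obstacle).} Setting $w=\sqrt W$ (or $w=\sqrt{W+\epsilon}$ to avoid the singularity at $0$), the inequality becomes linear:
\[
\dot w\le \lambda w+b\,e^{-\beta(t-t_0)/2},\qquad b=c\|A_{12}\|\|\Gamma\|\, l\sqrt{\kappa\hat\alpha}.
\]
We compare with the candidate $\bar w(t)=\sqrt{\eta_0}\,e^{-\beta(t-t_0)/2}$. This candidate satisfies $\dot{\bar w}=\lambda\bar w+b\,e^{-\beta(t-t_0)/2}$ exactly when $\sqrt{\eta_0}(-\beta/2-\lambda)=b$, i.e.\ $\sqrt{\eta_0}=2b/(-(2\lambda+\beta))$. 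This matches the definition of $\eta_0$, where the denominator is read in absolute value; it is positive by the hypothesis $\beta+2\lambda<0$.

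Since $w(t_0)\le\sqrt{\eta_0}=\bar w(t_0)$, a standard comparison (Gronwall) argument gives $w\le\bar w$. Concretely, $\frac{d}{dt}\bigl[(w-\bar w)e^{-\lambda t}\bigr]\le 0$. Hence $W(t)\le\eta_0e^{-\beta(t-t_0)}$. Summing with $V_i\le\kappa\alpha_i e^{-\beta(t-t_0)}\le\kappa\hat\alpha e^{-\beta(t-t_0)}$ over the $l$ pinned nodes yields $V\le(l\kappa\hat\alpha+\eta_0)e^{-\beta(t-t_0)}=\eta_1e^{-\beta(t-t_0)}$.

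\textbf{Step 3: the bound on $\dot V_i$.} For a pinned node $i\le l$ between impulses,
\[
\dot V_i=2x_i^Tf(t,x_i)+2c\,a_{ii}x_i^T\Gamma x_i+2c\sum_{j\ne i}a_{ij}x_i^T\Gamma x_j .
\]
The first term is at most $2\gamma\|\Gamma\|V_i\le 2\gamma\|\Gamma\|\kappa\hat\alpha e^{-\beta(t-t_0)}$. The diagonal term is nonpositive since $a_{ii}\le0$ and $\Gamma>0$, so it can be dropped. For the off-diagonal coupling we use $\sum_{j\ne i}a_{ij}=|a_{ii}|$ and $\|x_j\|\le\sqrt V$, which gives at most
\[
2c\|\Gamma\||a_{ii}|\sqrt{V_i}\sqrt{V}\le 2c\|\Gamma\||a_{ii}|\sqrt{\kappa\hat\alpha\,\eta_1}\,e^{-\beta(t-t_0)} .
\]
This is within the $\sqrt2$-weighted term of $\eta_2$; the stated $\sqrt2$ presumably comes from an estimate like $\|x_i-x_j\|\le\sqrt{2}\sqrt{V}$ in the authors' bookkeeping, and our bound is no larger. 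Combining the two terms gives $\dot V_i\le\eta_2e^{-\beta(t-t_0)}$.
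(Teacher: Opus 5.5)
Your proposal is correct and follows essentially the paper's route. You use the same differential inequality $\dot W\le 2\lambda_{max}(\Lambda\otimes\Gamma)W+2b\,e^{-\beta(t-t_0)/2}\sqrt{W}$, the same comparison curve $\eta_0e^{-\beta(t-t_0)}$, the same summation over pinned nodes to get $\eta_1$, and a direct estimate of $\dot V_i$. Your $\sqrt{W}$-linearization with an integrating factor is a cleaner version of the paper's first-touching argument.

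Your caveat about $x^TK\Gamma x\le\gamma\,x^T\Gamma x$ is legitimate: the paper uses this step without comment, and it needs extra structure such as $K\Gamma=\Gamma K$, which holds in the example since $\Gamma=I_3$. One correction: the $\sqrt{2}$ in $\eta_2$ comes from the Cauchy--Schwarz step $\sum_{j}a_{ij}^2\le 2a_{ii}^2$, not from a bound on $\|x_i-x_j\|$. Your sharper bound, obtained by dropping the nonpositive diagonal term, is still admissible.
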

\begin{proof}
	We shall first find an upper bound on $W(t)$. 
	Let  $\bar{x} = (x_{l+1}^T, \dots, x_{N}^T)^T$ and $\tilde x = (x_1^T, \dots, x_{l}^T)^T$. 
	Differentiating we obtain: 
	\begin{align}\label{Wdot}
		\dot{W}(t) &= \sum_{i=l+1}^{N} 2x_i^T\left(f(t, x_i) + c \sum_{j=l+1}^{N} a_{ij} \Gamma x_j + c \sum_{j=1}^{l} a_{ij}\Gamma x_j \right) \nonumber \\
							 & \le 2\bar{x}^T \left(I_{N-l}\otimes K\Gamma + c A_{22}\otimes \Gamma\right)\bar{x}  + 2c \sum_{i=l+1}^{N} \sum_{j=1}^{l} a_{ij}x_i^T\Gamma x_j \nonumber \\
							 &\le 2\bar{x}^T(\Lambda \otimes \Gamma)\bar{x} + 2 \bar{x}^T(cA_{21} \otimes \Gamma) \tilde x \nonumber \\
							 &\le 2 \lambda_{max}(\Lambda \otimes \Gamma) W(t) + 2 \|cA_{21}\otimes \Gamma\|\;\|\bar{x}\|\;\|\tilde x\| \nonumber \\
							 &\le -\delta_1 W(t) + \delta_2 e^{-\frac{\beta}{2}(t - t_0)}\sqrt{W(t)}
	\end{align}
	Where $\delta_1 = -2\lambda_{max}(\Lambda \otimes \Gamma) > \beta$, and $\delta_2 = 2c\|A_{21}\|\|\Gamma\|l\sqrt{\kappa\hat{\alpha}}$ $> 0$. We then can conclude from~\eqref{Wdot} that if $W(t) = \eta_0 \exp(-\beta (t - t_0))$, then
    \begin{align} 
		\dot{W}(t) &\le -\delta_1W(t) + \delta_2 e^{-\frac{\beta}{2}(t - t_0)}\sqrt{W(t)} \nonumber \\
							 &= -\delta_1\eta_0e^{-\beta(t - t_0)} + \delta_2 e^{-\beta(t - t_0)}\sqrt{\eta_0} \nonumber \\ 
							 &\le \eta_0\left(-\delta_1 + \delta_2 \left(\frac{\delta_1 - \beta}{\delta_2}\right)\right) e^{-\beta(t-t_0)}\nonumber \\
							 &= -\eta_0 \beta e^{-\beta(t - t_0)} \nonumber \\
							 &= \frac{\text{d}}{\text{dt}} \left(\eta_0e^{-\beta(t-t_0)}\right), \label{eq:boundW}
	\end{align}
    which implies that if $W(\hat{t}) \leq \eta_0 \exp(-\beta (\hat{t} - t_0))$ for some $\hat{t}\geq t_0$, then $W(t) \leq \eta_0 \exp(-\beta ({t} - t_0))$ for all $t\geq \hat{t}$. Note that in~\eqref{eq:boundW} we used the definition of $\eta_0$ and the assumption $\beta+2\lambda_{max}(\Lambda\otimes \Gamma)<0$.
    Hence, since $W(t_0)\leq \eta_0$, we have
    \[
       W(t) \le \eta_0 \exp(-\beta(t-t_0)) \textrm{ for all } t \ge t_0.
    \]
	Then we can obtain an upper bound on $V(t)$, 
	\begin{equation}
		V(t) = \sum_{i=1}^{l} x_i^Tx_i + W(t) \le \left(l\kappa\hat{\alpha} + \eta_0\right) e^{-\beta (t - t_0)} = \eta_1 e^{-\beta (t - t_0)},~~~~t\geq t_0.
	\end{equation}
	Then, we can obtain an upper bound on $\dot{V}_i(t)$:
	\begin{align*}
		\dot{V}_i(t) &= 2x_i^T\left(f(t, x_i) + c \sum_{j=1}^{N} a_{ij} \Gamma x_j\right) \nonumber \\
								 &\le 2 x_i^T K \Gamma x_i + 2c\|x_i\| \|\Gamma\| \sum_{j=1}^{N} \|a_{ij}\| \|x_j\| \nonumber \\
								 &\le 2\gamma \| \Gamma \| V_i(t) + 2c \sqrt{V_i(t)} \|\Gamma \| \sqrt{\sum_{j=1}^{N} \|a_{ij}\|^2 \sum_{j=1}^{N}\|x_j\|^2 } \nonumber \\
								 &\le 2\gamma \| \Gamma \| \kappa\alpha_ie^{-\beta(t-t_0)} + 2c \sqrt{\kappa\alpha_i} e^{-\frac{\beta}{2}(t-t_0)} \|\Gamma \| \sqrt{2} |a_{ii}| \sqrt{V(t)} \nonumber \\
								 & \le \left(2\gamma \|\Gamma\|\kappa\alpha_i + 2\sqrt{2}c\sqrt{\kappa\alpha_i}\|\Gamma\||a_{ii}|\sqrt{\eta_1}\right) e^{-\beta(t - t_0)} \nonumber \\
								 &\le \eta_2 e^{-\beta(t - t_0)},
	\end{align*}
    which completes the proof.
\end{proof}

It should be noted that Lemma~\ref{lm:bound} establishes the exponential convergence of $V(t)$ to zero and therefore guarantees the attractivity of the trivial solution of network~\eqref{eq:network}. However, stability of network~\eqref{eq:network} cannot be concluded directly from Lemma~\ref{lm:bound}, since the parameter $\eta_1$ is generally nonzero even when the initial condition of the network is trivial. The following lemma resolves this issue by showing that the trivial solution of network~\eqref{eq:network} is stable under the same conditions as those of Lemma~\ref{lm:bound}.

\begin{lemma} \label{lm:stability}
	If all the conditions of Lemma~\ref{lm:bound} hold, then the network~\eqref{eq:network} is stable.
\end{lemma}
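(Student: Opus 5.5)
The plan is to prove Lyapunov stability of the trivial solution directly from the $\varepsilon$--$\delta$ definition. I will split the time axis into a finite horizon $[t_0,T]$ and a tail $[T,\infty)$. On the tail, the uniform exponential bound of Lemma~\ref{lm:bound} makes $V$ small, regardless of how small the initial data are. On the finite horizon, I will use a Gronwall-type growth estimate combined with a count of the impulses, which makes $V$ proportional to $V(t_0)$. The point is that the constant $\eta_1$ in Lemma~\ref{lm:bound} does not vanish with the initial condition, but it only matters for large $t$, where the factor $e^{-\beta(t-t_0)}$ takes care of it.

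\emph{Choosing the horizon and the tail estimate.} Fix $\varepsilon>0$ and choose $T>t_0$ such that $\eta_1 e^{-\beta(T-t_0)}<\varepsilon^2$. Since the hypotheses of Lemma~\ref{lm:bound} hold, we have $V(t)\le \eta_1 e^{-\beta(t-t_0)}<\varepsilon^2$ for all $t\ge T$. Hence $\|x(t)\|<\varepsilon$ on $[T,\infty)$ for every admissible initial condition; in particular the condition $W(t_0)\le\eta_0$ holds automatically once the initial state is small enough.

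\emph{The finite horizon.} Between impulse times, Assumption~\ref{as:lipschitz} together with $f(t,0)=0$ gives $\|f(t,x_i)\|\le L\|x_i\|$. Hence
\[
\dot V(t)\le 2\bigl(L+c\|A\|\,\|\Gamma\|\bigr)V(t)=:2M\,V(t),
\]
so on each impulse-free interval $V$ grows by at most the factor $e^{2M(\cdot)}$. At an impulse time $t_k^{(i)}+\tau$, only the $i$th component jumps: $x_i\mapsto(1-d_i)x_i(t_k^{(i)})$. The jump is applied to the sampled state $x_i(t_k^{(i)})$, and $\|x_i(t_k^{(i)})\|^2\le V(t_k^{(i)})$, which has already been controlled by the growth estimate. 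Therefore each impulse multiplies the running bound on $V$ by at most $\rho:=\max\{1,\max_i (1-d_i)^2\}$. To be safe about the delayed sampling, I will track $\sup_{s\le t}V(s)$ rather than $V(t)$ itself.

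Next I count the impulses. By the triggering rule~\eqref{trigger}, $t_{k+1}^{(i)}>t_k^{(i)}+\tau$, so each pinned node fires at most $(T-t_0)/\tau+1$ times on $[t_0,T]$. This gives a uniform bound
\[
\sup_{t_0\le t\le T}V(t)\le C_T\,V(t_0),\qquad C_T=\rho^{\,l\left((T-t_0)/\tau+1\right)}e^{2M(T-t_0)}.
\]
Choosing $\delta$ with $C_T\delta^2<\varepsilon^2$ (and $\delta^2\le\eta_0$) then yields $\|x(t)\|<\varepsilon$ on $[t_0,T]$ whenever $\|x(t_0)\|<\delta$. Combining this with the tail estimate proves stability.

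The step I expect to be most delicate is the finite-horizon estimate. Impulses use delayed samples, so the post-jump value depends on the past state rather than the current one. In addition, Zeno behaviour has not yet been excluded at this stage, and the argument relies on the $\tau$-separation built into~\eqref{trigger} to keep the impulse count finite. I will handle the delayed sampling by working with the running supremum of $V$, so that both flow intervals and jumps only multiply this supremum by controlled constants.
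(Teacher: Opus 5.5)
There is one concrete error, in the jump step. With $s=t_k^{(i)}+\tau$, the model gives $x_i(s^+)=x_i(s)-d_i x_i(t_k^{(i)})$, not $x_i(s^+)=(1-d_i)x_i(t_k^{(i)})$. This is exactly the expression the paper uses for $V_i((t_{k+1}^{(i)}+\tau)^+)$ in the proof of Theorem~\ref{thm:main}. The two agree only if $\tau=0$ or the state does not move during the delay.

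Hence $\rho=\max\{1,\max_i(1-d_i)^2\}$ is not a valid per-impulse factor. For $d_i=1$ you would get $\rho=1$. Yet if $x_i(s)\approx -x_i(t_k^{(i)})$, then $\|x_i(s^+)\|\approx 2\|x_i(t_k^{(i)})\|$, so the $i$th contribution to $V$ nearly quadruples. Even under your reading the claim would not follow, because the untouched components are evaluated at $s$ while the jumped one comes from $t_k^{(i)}$.

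The repair is easy. The quantities $V(s)$, $\|x_i(s)\|\,\|x_i(t_k^{(i)})\|$ and $\|x_i(t_k^{(i)})\|^2$ are all at most $S(s):=\sup_{t_0\le r\le s}V(r)$, so
\[
V(s^+)=V(s)-2d_i\,x_i(s)^Tx_i(t_k^{(i)})+d_i^2\|x_i(t_k^{(i)})\|^2\le(1+|d_i|)^2S(s).
\]
So take $\rho=\max_i(1+|d_i|)^2$. The absolute values are needed because Lemma~\ref{lm:bound} imposes no condition on $d_i$. With this constant, your running-supremum induction and the impulse count (which requires $\tau>0$) go through unchanged, and the proof is correct.

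Apart from this, your route is the same mechanism as the paper's. The paper intersects the growth bound $h_1(t)=V(t_0)e^{\sigma(t-t_0)}$ with the decay bound $h_2(t)=\eta_1e^{-\beta(t-t_0)}$. Its crossing time $t_{max}$ is simply the optimal choice of your horizon $T$, and it yields the explicit estimate $V(t)\le V(t_0)^{\beta/(\beta+\sigma)}\eta_1^{\sigma/(\beta+\sigma)}$ instead of an $\varepsilon$--$\delta$ statement.

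However, the paper's inequality $\dot V\le\sigma V$ only holds between impulse instants. Its bound $h_1$ does not account for the jumps, which with delayed sampling can increase $V$. Your corrected impulse count fills exactly this hole.

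The two approaches can be combined. Since $\rho^{\,l(T-t_0)/\tau}$ is itself exponential in $T$, your estimate gives $V(t)\le\rho^{\,l}V(t_0)e^{\sigma'(t-t_0)}$ for all $t\ge t_0$, with $\sigma'=2M+l\ln\rho/\tau$. Intersecting this with $h_2$ as in the paper gives an explicit bound that is valid for the impulsive system.
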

\begin{proof}
	We will obtain an increasing upper bound on $V(t)$ and a decreasing upper bound on $V(t)$ and use their intersection to get an upper bound on $V(t)$ in terms of $V(t_0)$.
	Let $\mathbf{x} = (x_1^T, \dots, x_N^T)^T$. 
	Taking the derivative along the trajectories of the system:
	\begin{align}
		\dot{V}(t) &= 2\mathbf{x}^T\dot{\mathbf{x}} \nonumber \\
							 &= \sum_{i = 1}^{N} 2x_i^T \left(f(t, x_i) + c \sum_{j=1}^{N} a_{ij}\Gamma x_j\right) \nonumber \\
							 &\le \sum_{i=1}^{N} 2x_i^T K\Gamma x_i + 2c \sum_{i=1}^{N} \sum_{j=1}^{N} a_{ij} x_i \Gamma x_j \nonumber \\
							 &\le 2\mathbf{x}^T \left(I_N \otimes (K\Gamma) + 2cA \otimes \Gamma\right) \mathbf{x} \nonumber \\
							 &\le \sigma V(t),
	\end{align}
	where $\sigma = 2\|I_N \otimes(K\Gamma) + 2cA \otimes \Gamma \|$.
	Which implies 
	\begin{equation}\label{eq:increasing-bound}
		V(t) \le h_1(t):= V(t_0)e^{\sigma (t-t_0)} \textrm{ for } t\geq t_0.
	\end{equation}
%	Without loss of generality assume $\delta > 0$.
	By Lemma \ref{lm:bound}, we have
	\begin{equation}\label{eq:decreasing-bound}
		V(t) \le h_2(t):= \eta_1e^{-\beta(t - t_0)} \textrm{ for } t\geq t_0.
	\end{equation}
	Now function $h_1$ is increasing and $h_2$ is decreasing so the following equation has a unique solution
	\begin{equation}
		V(t_0)e^{\sigma(t - t_0)} = \eta_1e^{-\beta(t - t_0)}.
	\end{equation}
	We solve for $t$ and call this value
	\begin{equation}
		t_{max} = \frac{\ln\left(\frac{\eta_1}{V(t_0)}\right)}{\beta + \sigma} + t_0.
	\end{equation}
	Then $V(t)$ is less than or equal to the value of $h_1(t)$ and $h_2(t)$ at $t_{max}$ for $t \ge t_0$. 
	We denote this value $V_{max}$.
	Substituting $t_{max}$ into $h_1(t)$ (or $h_2(t)$), we obtain 
	\begin{align}
		V(t) \le  V_{max}&:=h_1(t_{max}) \nonumber\\
        &= V(t_0)e^{\sigma (t_{max} - t_0)}  \\
										 &= V(t_0)e^{\sigma ((\frac{\ln(\eta_1V(t_0)^{-1})}{\beta + \sigma} + t_0) - t_0)} \nonumber \\
										 &= V(t_0) \left(\frac{\eta_1}{V(t_0)}\right)^{\frac{\sigma}{\beta + \sigma}} \nonumber \\
										 &= V(t_0)^{\frac{\beta}{\beta + \sigma}}\eta_1^{\frac{\sigma}{\beta + \sigma}}
	\end{align}
So $V_{max} \rightarrow 0$ as $V(t_0) \rightarrow 0$ which implies the stability of network~\eqref{eq:network}.
\end{proof}

%\begin{remark}
	When $\tau=0$, the proposed framework reduces to the delay-free event-triggered pinning control scheme studied in \cite{Kexue}. In this case, Lemma 1 recovers the synchronization result of \cite{Kexue}. However, the present analysis provides several additional results that were not obtained in \cite{Kexue}, including explicit bounds on $V(t)$ and $\dot V_i(t)$, asymptotic stability of the trivial solution, and an explicit lower bound on the inter-event times. More importantly, the techniques developed in this paper allow the analysis to be extended to systems with actuation delays, which cannot be handled by the arguments in \cite{Kexue}. 
%\end{remark}

The condition $W(t_0)\le \eta_0$ is not restrictive in practice because $\eta_0$ can be enlarged by increasing the design parameters $\alpha_i$ and $\kappa$. Consequently, for any prescribed set of initial conditions, these parameters can be chosen so that $W(t_0)\le \eta_0$ holds. Thus, this condition should be interpreted as a compatibility requirement between the initial state and the triggering parameters rather than a restriction on the network dynamics. Larger values of $\alpha_i$ and $k$ enlarge the admissible initial set but may result in larger transient deviations.

The above two lemmas establish asymptotic stability of network~(4) provided that
\[
V_i(t)\le \kappa\alpha_i e^{-\beta(t-t_0)}, \quad i=1,\ldots,l.
\]
When actuation delays are present, however, the state continues to evolve after an event is triggered but before the corresponding impulse is applied. Consequently, the delay, the triggering parameter $\kappa$, and the contraction factor $\mu$ cannot be chosen independently. The delay must be sufficiently small to ensure that the state remains below the enlarged threshold $\kappa\alpha_i e^{-\beta(t-t_0)}$ during the delay interval, while the impulsive control gains must be sufficiently large to reduce the state below the tighter bound $\mu\alpha_i e^{-\beta(t-t_0)}$ after each impulse. The following theorem formalizes these requirements and provides explicit sufficient conditions for asymptotic stability of network~(4).

\begin{theorem}\label{thm:main}
	Let constants $\mu \in (0,1)$, {$\kappa \ge 1$}, and suppose $\Lambda < 0$. Then select parameters $\alpha_i$ ($i=1,\dots,l$) and $\beta$ in~\eqref{trigger} so that
    \[
V_i(t_0) < \mu\alpha_i, \quad W(t_0) \le \eta_0,\quad \textrm{ and } \quad \beta+2\lambda_{max}(\Lambda\otimes \Gamma)<0.
    \]
	If the actuation delay $\tau$ satisfies 
    \begin{equation}\label{tau.require}
    \tau \le \min \left\{\frac{1}{\beta} \ln \left(\frac{\kappa\check{\alpha}\beta +\eta_2}{\check{\alpha}\beta + \eta_2}\right), ~ -\frac{2}{\beta}\ln\left( \frac{\eta_3}{\eta_3+1} \right)\right\},
    \end{equation}
    and the control gains satisfy 
    \begin{equation}\label{di.require}
    (\eta_3 + 1)\left(1 - \exp(-\frac{\beta}{2}\tau)\right) \le d_i \le 1,
    \end{equation}
    for $i = 1, \dots, l$, then system \eqref{eq:error-system} is asymptotically stable.
	Furthermore, the inter-event times $\{t^{(i)}_k,t^{(i)}_{k+1}\}_{k\in\mathbb{N}}$ of the $i$th pinned nodes are bounded below by $$\tau + \frac{1}{\beta}\ln \left(\frac{\alpha_i\beta+ {\eta_2}}{\mu\alpha_i\beta+ {\eta_2}}  \right),$$
    where $i=1,2,...,l$.
\end{theorem}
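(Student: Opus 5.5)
The plan is to reduce the theorem to Lemmas~\ref{lm:bound} and~\ref{lm:stability}. The main task is to show that
\[
V_i(t)\le \kappa\alpha_i e^{-\beta(t-t_0)} \quad \text{for all } t\ge t_0 \text{ and } i=1,\dots,l.
\]
Once this bound holds, Lemma~\ref{lm:bound} gives exponential decay of $V$ and Lemma~\ref{lm:stability} gives stability, so asymptotic stability follows. The bound will be proved by induction over the impulse times of node $i$, and the induction will carry the stronger statement that right after each impulse, i.e. at $(t_k^{(i)}+\tau)^+$, we have $V_i\le \mu\alpha_i e^{-\beta(t-t_0)}$. The base case is $V_i(t_0)<\mu\alpha_i$.

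\textbf{Step 1 (between an impulse and the next event).} On $(t_k^{(i)}+\tau,t_{k+1}^{(i)}]$, the triggering rule~\eqref{trigger} keeps $V_i\le \alpha_i e^{-\beta(t-t_0)}\le\kappa\alpha_i e^{-\beta(t-t_0)}$. I will argue by a first-exit (bootstrap) time. Let $T$ be the first time the enlarged bound could fail. On $[t_0,T]$ the hypotheses of Lemma~\ref{lm:bound} hold, so $\dot V_i\le \eta_2 e^{-\beta(t-t_0)}$ there.

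\textbf{Step 2 (the delay interval).} On $(t_{k+1}^{(i)},t_{k+1}^{(i)}+\tau]$, integrate $\dot V_i\le\eta_2 e^{-\beta(s-t_0)}$ starting from $V_i(t_{k+1}^{(i)})=\alpha_i e^{-\beta(t_{k+1}^{(i)}-t_0)}$. This gives
\[
V_i(t)\le e^{-\beta(t_{k+1}^{(i)}-t_0)}\Bigl(\alpha_i+\tfrac{\eta_2}{\beta}\bigl(1-e^{-\beta(t-t_{k+1}^{(i)})}\bigr)\Bigr).
\]
To compare this with $\kappa\alpha_i e^{-\beta(t-t_0)}$, multiply through by $e^{\beta(t-t_{k+1}^{(i)})}\le e^{\beta\tau}$. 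The resulting condition is $(\alpha_i\beta+\eta_2)e^{\beta\tau}\le \kappa\alpha_i\beta+\eta_2$, which after rearranging is exactly the first term in~\eqref{tau.require}. The worst case is $\alpha_i=\check\alpha$, since the ratio is decreasing in $\alpha_i$. This closes the bootstrap, because the enlarged bound then holds strictly up to each impulse.

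\textbf{Step 3 (impulse contraction).} At $t=t_{k+1}^{(i)}+\tau$ the jump gives
\[
x_i^+ = x_i(t)-d_i x_i(t_{k+1}^{(i)}) = (1-d_i)x_i(t)+d_i\bigl(x_i(t)-x_i(t_{k+1}^{(i)})\bigr).
\]
I will bound $\|x_i(t)-x_i(t_{k+1}^{(i)})\|$ by integrating $\|\dot x_i\|\le L\|x_i\|+c\|\Gamma\|\sum_j|a_{ij}|\|x_j\|$. Using $\|x_i\|\le\sqrt{\kappa\hat\alpha}e^{-\beta(s-t_0)/2}$, $\sum_j|a_{ij}|\|x_j\|\le\sqrt2|a_{ii}|\sqrt{V}$, and $V\le\eta_1e^{-\beta(s-t_0)}$, the integral is bounded by
\[
\tfrac{\beta}{2}\sqrt{\mu\check\alpha}\,\eta_3\, e^{-\beta(t_{k+1}^{(i)}-t_0)/2}\cdot\tfrac{2}{\beta}\bigl(1-e^{-\beta\tau/2}\bigr),
\]
which is where $\eta_3$ comes from. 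Combining this with $\|x_i(t)\|\le\sqrt{\kappa\alpha_i}e^{-\beta(t-t_0)/2}$, the goal is $\|x_i^+\|\le\sqrt{\mu\alpha_i}e^{-\beta(t-t_0)/2}$. The gain window~\eqref{di.require} and the second term in~\eqref{tau.require} (which guarantees the window is nonempty, i.e. $(\eta_3+1)(1-e^{-\beta\tau/2})\le1$) should give this.

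\textbf{Main obstacle.} The difficulty lies in Step 3, namely matching the $(1-d_i)\|x_i(t)\|$ term against the target $\sqrt{\mu}$. Bounding $\|x_i(t)\|$ by $\sqrt{\kappa\alpha_i}$ leaves a stray factor of $\sqrt{\kappa/\mu}$. I would first try the decomposition $x_i^+=(1-d_i)x_i(t_{k+1}^{(i)})+(x_i(t)-x_i(t_{k+1}^{(i)}))$ instead, using $V_i(t_{k+1}^{(i)})=\alpha_i e^{-\beta(t_{k+1}^{(i)}-t_0)}$. With this split, $d_i\ge(\eta_3+1)(1-e^{-\beta\tau/2})$ and $1-d_i\le e^{-\beta\tau/2}$ are what make the constants fit, possibly after adjusting how $\mu$ is absorbed into $\eta_3$.

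\textbf{Inter-event bound.} After the impulse at $t_k^{(i)}+\tau$ we have $V_i\le\mu\alpha_i e^{-\beta(t-t_0)}$. The next event requires $V_i$ to reach $\alpha_ie^{-\beta(t-t_0)}$. Integrate $\dot V_i\le\eta_2e^{-\beta(s-t_0)}$ from $s_0:=t_k^{(i)}+\tau$ and let $h$ be the time elapsed after $s_0$. Requiring the bound to reach the threshold gives $\mu\alpha_i+\frac{\eta_2}{\beta}(1-e^{-\beta h})\ge\alpha_ie^{-\beta h}$, i.e. $e^{\beta h}\ge(\alpha_i\beta+\eta_2)/(\mu\alpha_i\beta+\eta_2)$. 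Adding the delay $\tau$ yields the stated lower bound on $t_{k+1}^{(i)}-t_k^{(i)}$, which excludes Zeno behavior.
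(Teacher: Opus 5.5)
The gap is in Step 3. Your overall scheme matches the paper's: the post-impulse invariant $V_i\le\mu\alpha_i e^{-\beta(t-t_0)}$, the $\kappa$-envelope on each delay window from $\dot V_i\le\eta_2e^{-\beta(t-t_0)}$ and the first term of \eqref{tau.require}, an increment bound producing $\eta_3$, and the comparison argument for inter-event times. Steps 1--2 and the inter-event computation are fine, with one slip: $(\kappa\alpha\beta+\eta_2)/(\alpha\beta+\eta_2)$ is \emph{increasing} in $\alpha$ for $\kappa\ge1$, and that is why $\check\alpha$ is the worst case.

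Step 3, which you leave open, cannot be closed under \eqref{di.require}. With your second split and the exact value $\|x_i(t_{k+1}^{(i)})\|=\sqrt{\alpha_i}\,e^{-\frac{\beta}{2}(t_{k+1}^{(i)}-t_0)}$, the best available estimate is
\[
\|x_i((t_{k+1}^{(i)}+\tau)^+)\|\le\sqrt{\alpha_i}\,e^{-\frac{\beta}{2}(t_{k+1}^{(i)}-t_0)}\Bigl(\eta_3\sqrt{\mu}\bigl(1-e^{-\frac{\beta}{2}\tau}\bigr)+1-d_i\Bigr).
\]
Reaching $\sqrt{\mu\alpha_i}\,e^{-\frac{\beta}{2}(t_{k+1}^{(i)}+\tau-t_0)}$ therefore requires $1-d_i\le\sqrt{\mu}\bigl(1-(\eta_3+1)(1-e^{-\frac{\beta}{2}\tau})\bigr)$, which is stronger than \eqref{di.require}. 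Re-absorbing $\mu$ into $\eta_3$ cannot fix this, because the obstruction sits in the $(1-d_i)$ term. Take $\tau$ small and $d_i$ at the lower end of \eqref{di.require}. Then both $d_i$ and the increment $\|x_i(t_{k+1}^{(i)}+\tau)-x_i(t_{k+1}^{(i)})\|$ are $O(\tau)$ relative to $\|x_i(t_{k+1}^{(i)})\|$. So $V_i((t_{k+1}^{(i)}+\tau)^+)$ is close to $V_i(t_{k+1}^{(i)})=\alpha_ie^{-\beta(t_{k+1}^{(i)}-t_0)}$ and exceeds $\mu\alpha_i e^{-\beta(t_{k+1}^{(i)}+\tau-t_0)}$.

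The paper's proof uses exactly your second split. It closes the step only by bounding $\|x_i(t_{k+1}^{(i)})\|$ by $\sqrt{\mu\alpha_i}\,e^{-\frac{\beta}{2}(t_{k+1}^{(i)}-t_0)}$. That contradicts its own earlier observation $V_i(t_{k+1}^{(i)})=\alpha_ie^{-\beta(t_{k+1}^{(i)}-t_0)}$. Your worry about a stray factor is justified, and the paper does not actually supply this step.

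The argument does give the following. Combine the display with \eqref{di.require} and the second term of \eqref{tau.require}; the latter keeps the base of the square below positive. This yields
\[
V_i((t_{k+1}^{(i)}+\tau)^+)\le\mu_\tau\,\alpha_i e^{-\beta(t_{k+1}^{(i)}+\tau-t_0)},\qquad \mu_\tau:=\Bigl(1-\eta_3(1-\sqrt{\mu})\bigl(e^{\frac{\beta}{2}\tau}-1\bigr)\Bigr)^2\le1,
\]
with $\mu_\tau<1$ whenever $\tau>0$ and $\eta_3>0$. Steps 1--2 only use that the post-impulse value lies below the threshold, so that $V_i(t_{k+1}^{(i)})=\alpha_ie^{-\beta(t_{k+1}^{(i)}-t_0)}$ at the start of each delay window. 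Running your induction with $\mu_\tau$ in place of $\mu$ therefore still gives the $\kappa$-envelope, and hence asymptotic stability via Lemmas~\ref{lm:bound} and~\ref{lm:stability}.

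Your inter-event argument, however, then yields only $\tau+\frac{1}{\beta}\ln\frac{\alpha_i\beta+\eta_2}{\mu_\tau\alpha_i\beta+\eta_2}$. The bound as stated, with $\mu$, needs the strengthened gain condition $d_i\ge1-\sqrt{\mu}+\sqrt{\mu}(\eta_3+1)(1-e^{-\frac{\beta}{2}\tau})$. This condition is still compatible with $d_i\le1$ because of the second term of \eqref{tau.require}.
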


\begin{proof}	First we show that $V_i(t)$ is bounded by $\kappa \alpha_i  \exp(-\beta (t-t_0))$ for $i = 1, \dots, l$ and use Lemmas \ref{lm:bound} and \ref{lm:stability} to obtain the stability result.
We use an inductive argument. 

By assumption $V_i(t_0) < \mu\alpha_i$, we just need to show that if 
\[
V_i((t_k^{(i)} + \tau)^+) \le \mu\alpha_i \exp(-\beta (t_k^{(i)} + \tau - t_0))
\]
then 
\[
V_i(t) \le \kappa\alpha_i \exp(-\beta (t-t_0)) \textrm{ for }t \in (t_k^{(i)} + \tau, t_{k+1}^{(i)} + \tau],
\]
and
\[
V_i((t_{k+1}^{(i)} + \tau)^+) \le \mu\alpha_i \exp(-\beta (t_{k+1}^{(i)} + \tau - t_0)).
\]
For $t \in (t_k^{(i)} + \tau, t_{k+1}^{(i)})$, $V_i(t)$ is less than $\kappa \alpha_i  \exp(-\beta (t-t_0))$ according to the definition of event times in~\eqref{trigger}, thus it is sufficient to show 
\[
V_i(t) \le \kappa\alpha_i \exp(-\beta (t-t_0)) \textrm{ for }t \in [t_{k+1}^{(i)}, t_{k+1}^{(i)} + \tau],
\] 
and 
\[
V_i((t_{k+1}^{(i)} + \tau)^+) \le \mu \alpha_i \exp(-\beta (t_{k+1}^{(i)} + \tau - t_0)).
\]
To do this, we use a contradiction argument and suppose that there exists some $t \in [t_{k+1}^{(i)}, t_{k+1}^{(i)} + \tau]$ so that
\[
V_i(t) > \kappa\alpha_i \exp(-\beta (t-t_0)).
\]
Then let 
\[
\hat{t} = \inf \left\{t \in [t_{k+1}^{(i)}, t_{k+1}^{(i)} + \tau] : V_i(t) = \kappa\alpha_i\exp(-\beta(t-t_0))\right\},
\]
which then implies that $\hat{t}\in (t_{k+1}^{(i)}, t_{k+1}^{(i)} + \tau)$ and $V_i(t) \le \kappa\alpha_i\exp{(-\beta(t-t_0))}$ for $t \in [t_{k+1}^{(i)}, \hat{t}]$. By Lemma \ref{lm:bound}, we get 
\[
\dot{V_i}(t) \le \eta_2\exp(-\beta(t-t_0)) \textrm{ for } t \in [t_{k+1}^{(i)}, \hat{t}].
\]
Then, noting that $V_i(t_{k+1}^{(i)}) = \alpha_i\exp(-\beta(t_{k+1}^{(i)} - t_0))$, we have that $V_i(t)$ is bounded by the solution to the following initial value problem for $t \in [t_{k+1}^{(i)}, \hat{t}]$:
\begin{equation} \label{eq:ivp}
	\dot{y}(t) = \eta_2e^{-\beta{(t-t_0)}}, \quad y(t_{k+1}^{(i)}) = \alpha_ie^{-\beta(t_{k+1}^{(i)} - t_0)},
\end{equation}
which has solution $y$ satisfying
\begin{equation} \label{eq:Vibound}
	V_i(t) \le y(t) = \left(\alpha_i + \frac{\eta_2}{\beta}\right)e^{-\beta(t_{k+1}^{(i)} - t_0)} - \frac{\eta_2}{\beta}e^{-\beta(t - t_0)}.  
\end{equation}
Now $y$ is increasing, $\kappa\alpha_i\exp(-\beta(t - t_0))$ is decreasing, and $V_i(t_{k+1}^{(i)}) < \kappa\alpha_i\exp(-\beta(t_{k+1}-t_0))$ so there is a unique intersection between the graphs of $y$ and the function $\kappa\alpha_i\exp(-\beta(t-t_0))$ which occurs at 
\begin{equation}\label{yintersect}
	\tilde{t} = t_{k+1}^{(i)} + \frac{1}{\beta}\ln\left(\frac{\kappa\alpha_i\beta + \eta_2}{\alpha_i\beta + {\eta_2}}\right) \ge t_{k+1}^{(i)} + \tau.
\end{equation}
Because $y(t)$ is an upper bound on $V_i(t)$, we then can derive from~\eqref{yintersect} that $\tilde{t}\geq \hat{t} \ge t_{k+1}^{(i)} + \tau$, which is a contradiction to the definition of $\hat{t}$. Thus 
\[
V_i(t) \leq \kappa\alpha_i\exp(-\beta(t - t_0))  \textrm{ for } t \in [t_{k+1}^{(i)}, t_{k+1}^{(i)} + \tau].
\]

It remains to verify that 
$$V_i((t_{k+1}^{(i)} +\tau)^+) \le \mu\alpha_i \exp(-\beta (t_{k+1}^{(i)} + \tau - t_0)).$$ 
By applying Assumption $\ref{as:lipschitz}$ with Lemma \ref{lm:bound} and using the fact that $V_i(t) \le \kappa\alpha_i \exp(-\beta(t - t_0))$ for $t \le t_{k+1}^{(i)} + \tau$, we observe that
\begin{align*}
	\|x_i(t_{k+1}^{(i)} + \tau) - x_i(t_{k+1}^{(i)})\| &= \| \int_{t_{k+1}^{(i)}}^{t_{k+1}^{(i)} + \tau} \dot{x}_i(t) dt \| \nonumber \\
																 &= \| \int_{t_{k+1}^{(i)}}^{t_{k+1}^{(i)} + \tau} f(t, x_i(t)) + c \sum_{j=1}^{N} a_{ij} \Gamma x_j(t) dt \| \nonumber \\
																 &\le \int_{t_{k+1}^{(i)}}^{t_{k+1}^{(i)} + \tau} L \sqrt{V_i(t)} + 
                                                                 c \|\Gamma\| \sqrt{\sum_{j=1}^{N} \|a_{ij}\|^2 \sum_{j=1}^{N} \|x_j\|^2} dt \nonumber \\
																 &\le \int_{t_{k+1}^{(i)}}^{t_{k+1}^{(i)} + \tau} L\sqrt{\kappa\alpha_i}e^{-\frac{\beta}{2}(t-t_0)} + 
                                                                 c \|\Gamma\| \sqrt{2} \max\limits_{1 \le i \le l}\{\|a_{ii}\|\} \sqrt{V(t)} dt \nonumber \\
																 &\le \int_{t_{k+1}^{(i)}}^{t_{k+1}^{(i)} + \tau} (L\sqrt{\kappa\alpha_i} + 
                                                                 \sqrt{2}c \max\limits_{1 \le i \le l}\{\|a_{ii}\|\}\|\Gamma\| \sqrt{\eta_1})e^{-\frac{\beta}{2}(t-t_0)}  dt \nonumber \\
																 &\le \int_{t_{k+1}^{(i)}}^{t_{k+1}^{(i)} + \tau} \frac{\eta_3\beta\sqrt{\mu\alpha_i}}{2}e^{-\frac{\beta}{2}(t - t_0)} dt \nonumber \\
																 &\le \eta_3\sqrt{\mu\alpha_i}(e^{-\frac{\beta}{2}(t_{k+1}^{(i)} - t_0)} - e^{-\frac{\beta}{2}(t_{k+1}^{(i)} + \tau - t_0)}).
\end{align*}
Then using the above inequality and our assumption about $d_i$, we obtain
\begin{align*}
	& V_i((t_{k+1}^{(i)} + \tau)^+) \cr
    &= \|x_i(t_{k+1}^{(i)} + \tau) -d_ix_i(t_{k+1}^{(i)})\|^2 \nonumber \\
													&= \|x_i(t_{k+1}^{(i)} + \tau) - x_i(t_{k+1}^{(i)}) + (1-d_i) x_i(t_{k+1}^{(i)})  \|^2 \nonumber \\
													&\le (\|x_i(t_{k+1}^{(i)} + \tau) - x_i(t_{k+1}^{(i)})\| + (1-d_i)\|x_i(t_{k+1}^{(i)})\|)^2 \nonumber \\
													&\le \Big(\eta_3\sqrt{\mu\alpha_i} (e^{-\frac{\beta}{2}(t_{k+1}^{(i)} - t_0)} - e^{-\frac{\beta}{2}(t_{k+1}^{(i)} + \tau - t_0)}) + (1-d_i)\sqrt{\mu\alpha_i}e^{-\frac{\beta}{2}(t_{k+1}^{(i)} - t_0)} \Big)^2  \nonumber \\
													&\le \Big(\eta_3\sqrt{\mu\alpha_i} (e^{-\frac{\beta}{2}(t_{k+1}^{(i)} - t_0)} - e^{-\frac{\beta}{2}(t_{k+1}^{(i)} + \tau - t_0)}) + ((1-(\eta_3 + 1)(1 - e^{-\frac{\beta}{2}\tau}))\sqrt{\mu\alpha_i}e^{-\frac{\beta}{2}(t_{k+1}^{(i)} - t_0)}\Big)^2 \nonumber \\
													&=\mu\alpha_i e^{-\beta(t_{k+1}^{(i)} + \tau - t_0)}.
\end{align*}
So by the above induction argument, $V_i(t)$ is bounded by $\kappa \alpha_i \exp(-\beta (t-t_0))$ for $i = 1, \dots, l$.
Then Lemma \ref{lm:bound} implies $V(t) \le \eta_1 \exp(-\beta (t - t_0))$ for all $t \ge t_0$, which then implies that the limit as $t \rightarrow \infty$ of $V(t)$ is $0$. Furthermore, with this upper bound on $V_i(t)$, Lemma \ref{lm:stability} shows the system is stable. Therefore, we conclude that network~\eqref{eq:error-system} is asymptotically stable.

Finally, we establish a lower bound on the inter-event times for each pinned node. 
Obviously events occur at least $\tau$ apart, but we can also find a lower bound on the time between the $p$th control impulse and the $(p+1)$th event for all $p\in\mathbb{N}$. 

By Lemma \ref{lm:bound}, we have that $\dot{V}_i(t) \le {\eta_2} \exp(-\beta(t-t_0))$ for $i = 1, \dots, l$. Then, noting that $V_i((t_{k}^{(i)} + \tau)^+) \le \mu\alpha_i\exp(-\beta(t_{k}^{(i)}+\tau - t_0))$, we see that $V_i(t)$ is bounded by the solution to the initial value problem \eqref{eq:ivp2} for $t \in [t_{k}^{(i)} + \tau, t_{k+1}^{(i)}]$:
\begin{equation} \label{eq:ivp2}
	\dot{z} = {\eta_2} e^{-\beta(t-t_0)}, \quad z(t_{k}^{(i)}+\tau) = \mu\alpha_ie^{-\beta(t_{k}^{(i)}+\tau - t_0)}
\end{equation}
which has a solution $z$ satisfying
\begin{equation} \label{eq:Vibound2}
	V_i(t) \le z(t) := \left(\mu \alpha_i + \frac{\eta_2}{\beta}\right)e^{-\beta(t_k^{(i)} + \tau - t_0)} - \frac{\eta_2}{\beta}e^{-\beta(t - t_0)}.
\end{equation}
Now $z$ is increasing, $\alpha_i\exp(-\beta(t - t_0))$ is decreasing, and $V_i(t_k^{(i)} + \tau) \le \alpha_i\exp(-\beta(t_k+\tau -t_0))$ so there exists a unique intersection between the graphs of $z$ and the function $\alpha_i\exp(-\beta(t - t_0))$ which occurs at 
\begin{equation}
	t = t_{k}^{(i)} + \tau + \frac{1}{\beta} \ln \left(\frac{\alpha_i\beta+\eta_2}{\mu\alpha_i\beta+\eta_2}  \right).
\end{equation}
Because $z(t)$ is an upper bound on $V_i(t)$, it takes at least $\frac{1}{\beta} \ln \left(\frac{\alpha_i\beta+\eta_2}{\mu\alpha_i\beta+\eta_2}  \right)$ units of time for the Lyapunov function $V_i$ to evolve from $V_i((t_{k}^{(i)} + \tau)^+)$ to reach the triggering threshold in~\eqref{trigger} at time $t^{(i)}_{k+1}$. That is,  the minimum time between consecutive events for the pinned $i$th node is $\tau +\frac{1}{\beta} \ln \left(\frac{\alpha_i\beta+\eta_2}{\mu\alpha_i\beta+\eta_2} \right)$. Therefore, network~\eqref{eq:error-system} with the event times determined by~\eqref{trigger} is free of Zeno behavior.
\end{proof}

\begin{figure}[htpb]
	\centering
	\includegraphics[width=0.6\textwidth]{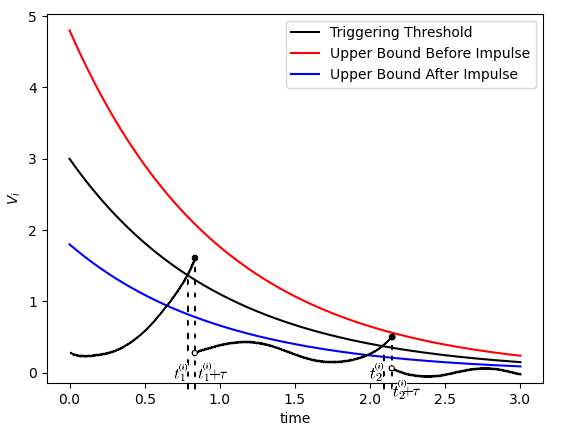}
	\caption{Lyapunov function $V_i(t)$ for the controlled node $i$ with its triggering threshold, and the scaled upper and lower bounds.}
	\label{fig:idea}
\end{figure}
The idea of the proposed control scheme is illustrated in Figure~\ref{fig:idea}. For the $i$th controlled node, we consider the triggering threshold together with two exponential bounds, one scaled by the factor $\kappa$ and the other by $\mu$. When $V_i(t)$ reaches the triggering threshold at time $t^{(i)}_k$, an impulse is scheduled and executed at time $t^{(i)}_k+\tau$ due to the actuation delay. The delay bound in Theorem~\ref{thm:main} guarantees that
\[
V_i(t)\le \kappa\alpha_i e^{-\beta(t-t_0)}
\]
throughout the interval between the triggering instant and the delayed impulse, while the control-gain condition ensures that
\[
V_i((t^{(i)}_k+\tau)^+) \le \mu\alpha_i e^{-\beta(t^{(i)}_k+\tau-t_0)},
\]
immediately after the impulse is applied. Consequently, the stability result follows from Lemmas~\ref{lm:bound} and~\ref{lm:stability}. Furthermore, Theorem~\ref{thm:main} establishes a strictly positive lower bound on inter-event times, thereby excluding Zeno behavior and ensuring practical implementability of the proposed control strategy. The parameter $\kappa\ge1$ determines the allowable overshoot before an impulse is applied, whereas $\mu\in(0,1)$ characterizes the degree of state reduction achieved by each impulse.
%\end{remark}

Theorem~\ref{thm:main} reveals that both the feasibility of asymptotic stability and the maximum admissible actuation delay depend on the selection of pinned nodes. This observation motivates a topology-based pinning strategy derived from the stability condition
\[
\Lambda=\gamma I_{N-l}+cA_{22}<0,
\]
where \(A_{22}\) is the principal submatrix of the network Laplacian corresponding to the unpinned nodes. Since the spectrum of \(\Lambda\) depends on the choice of pinned nodes, the pinning set should be selected such that
\[
\lambda_{\max}(\Lambda)<0.
\]
In practice, this can often be achieved by first pinning low-degree nodes, i.e., nodes with relatively small values of \(|a_{ii}|\), and then adding further pinned nodes until \(\Lambda\) becomes negative definite. Among all feasible pinning configurations satisfying \(\Lambda<0\), the one yielding the largest admissible delay bound in Theorem~1 is preferred. This provides a systematic topology-based guideline for selecting pinning nodes while enhancing robustness with respect to actuation delays.

\section{Numerical Example}\label{Sec4}

In this section, we demonstrate how the stability conditions developed in Theorem~\ref{thm:main} can be used to select the pinned nodes, determine the admissible actuation delay, and design the impulsive control gains. The effectiveness of the proposed approach is illustrated through a network of coupled Chua circuits.

\begin{figure}[htpb]
   \centering
    \begin{tikzpicture}[
        % Define a style for all nodes
        scale = 0.5,
        node distance=0.7cm, % Default distance between nodes
        every node/.style={circle, draw, minimum size=8mm}
    ]

    \node[circle,draw,fill=gray!25] (N1) at (0, 0) {1};
    \node[circle,draw,fill=gray!25] (N2) [above right=of N1] {2};
    \node[circle,draw,fill=gray!25] (N3) [above right=of N2] {3};
    \node (N4) [right=of N3] {4};
    \node (N5) [below right=of N4] {5};
    \node (N6) [below right=of N5] {6};
    \node (N7) [below left=of N6] {7};
    \node (N8) [below right=of N1] {8};

    % Style for edge weights (labels) - MODIFIED
    \tikzset{weight/.style={

    draw=none, fill=none, circle=false,
    font=\small}}

    \draw (N1) -- (N4) ;
    \draw (N1) -- (N5);
    \draw (N1) -- (N6) ;
    \draw (N1) -- (N7);
    \draw (N1) -- (N8);

    \draw (N2) -- (N4);
    \draw (N2) -- (N5);
    \draw (N2) -- (N6);
    \draw (N2) -- (N7);
    \draw (N2) -- (N8);

    \draw (N3) -- (N4);
    \draw (N3) -- (N5);
    \draw (N3) -- (N6);
    \draw (N3) -- (N7);
    \draw (N3) -- (N8);

    \draw (N4) -- (N1);
    \draw (N4) -- (N2);
    \draw (N4) -- (N3);
    \draw (N4) -- (N7);
    \draw (N4) -- (N8);

    \draw (N5) -- (N1);
    \draw (N5) -- (N2);
    \draw (N5) -- (N3);
    \draw (N5) -- (N7);
    \draw (N5) -- (N8);

    \draw (N6) -- (N1);
    \draw (N6) -- (N2);
    \draw (N6) -- (N3);
    \draw (N6) -- (N7);
    \draw (N6) -- (N8);

    \draw (N7) -- (N1);
    \draw (N7) -- (N2);
    \draw (N7) -- (N3);
    \draw (N7) -- (N4);
    \draw (N7) -- (N5);
    \draw (N7) -- (N6);
    \draw (N7) -- (N8);

    \draw (N8) -- (N1);
    \draw (N8) -- (N2);
    \draw (N8) -- (N3);
    \draw (N8) -- (N4);
    \draw (N8) -- (N5);
    \draw (N8) -- (N6);

    \end{tikzpicture}
    \caption{Network topology with the pinned nodes highlighted in gray. Let all the edges of nodes 1, 2, or 3 have weight 30, all edges between nodes 4, 5, or 6 and 7, or 8 have weight 35, and the edge between node 7 and 8 have weight 40.}    \label{fig:graph}
\end{figure}

Consider the network topology shown in Figure~\ref{fig:graph} with $N=8$, and node dynamics described by the Chua Circuit from~\cite{FurPinCtrl} modeled by~\eqref{eq:chua}. The uncontrolled dynamics of this network are illustrated in Figure \ref{fig:chua}.
\begin{equation} \label{eq:chua}
	\dot{x}(t) = Bx(t) - p[h(x_1(t)),0,0]^T
\end{equation}
where $h(z) = \frac{m_0-m_1}{2}(|x+1| - |x-1|)$, \[
	B = \begin{bmatrix}-pm_1 & p & 0 \\ 1 & -1 & 1 \\ 0 & -q & 0\end{bmatrix},
\]
and $p = 2.4$, $q = 2.7$, $m_0 = -0.5$, and $m_1 = 0.2$.

\begin{figure}[htpb]
	\centering
	\includegraphics[width=0.45\textwidth]{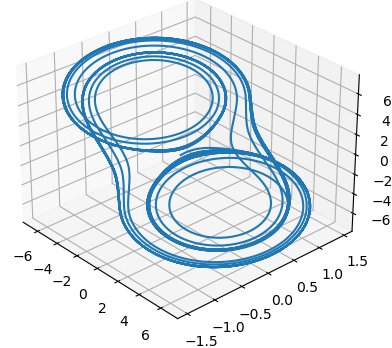}
	\caption{Phase portrait of \eqref{eq:chua} with initial conditions $(0.4, -0.2, 0.2)$}
	\label{fig:chua}
\end{figure}

Suppose a coupling strength $c=0.05$ and inner coupling matrix $\Gamma = I_3$ the $3 \times 3$ identity matrix.
Observing that 
\begin{equation*}
    \|f(t,x) - f(t,y)\| 
    \le \|B(x-y) + p|m_0 - m_1|\text{diag}\{1,0,0\}|x_1-y_1|\|
\end{equation*}
and 
\begin{equation*}
 (x-y)^T(f(t,x) - f(t,y)) \le (x-y)^T(B + p|m_0 - m_1|\text{diag}\{1,0,0\})(x-y),
\end{equation*}
we know that Assumptions \ref{as:lipschitz} and \ref{as:matrix} are satisfied by setting $K = (B + p|m_0 - m_1|\text{diag}\{1,0,0\}) \Gamma^{-1}$ and $L = \|B\| + p|m_0-m_1|$.
We choose initial conditions for each component of each $x_i$ between zero and 0.5, setting each $\alpha_i = 1$, $\mu = 0.95$, $\kappa = 5$, and $\beta = 0.2$.  
Applying the topology-based pinning strategy described above, we find that the system can be stabilized with an admissible delay of approximately 0.0042 by pinning nodes 1, 2, and 3. Then the conditions of Theorem \ref{thm:main} are satisfied by setting the control gains equal to 0.94.  
However the conditions of Theorem 1 are also satisfied by taking an actuation delay of 0.002 and control gains equal to 0.45 which we choose to create more illustrative figures.

By solving numerically, as shown in Figure \ref{fig:2ms-delay} (\cite{jitcxde}), we see that all nodes in the network are stabilized towards the origin.
The actuation delay between when the Lyapunov function for a controlled node exceeds the triggering threshold and when the control impulse occurs can be seen in this figure. 
As predicted by the proposed event-triggering mechanism, the Lyapunov function first reaches the prescribed triggering threshold, which generates a triggering event, and the corresponding impulsive update is executed after the actuation delay.
Figure~\ref{fig:2ms-delay-zoom} shows that all inter-event times over the time interval $[0,2]$ are greater than the theoretical lower bound of approximately 0.00205 derived in Theorem~\ref{thm:main}. This agreement between the theoretical analysis and the numerical results confirms the validity of the proposed triggering mechanism and verifies that the closed-loop system is free of Zeno behavior.
\begin{figure}[htpb]
    \centering
    \begin{subfigure}{0.48\textwidth}
    \includegraphics[width=\textwidth,trim=0cm 0cm 0cm 2.2cm, clip]{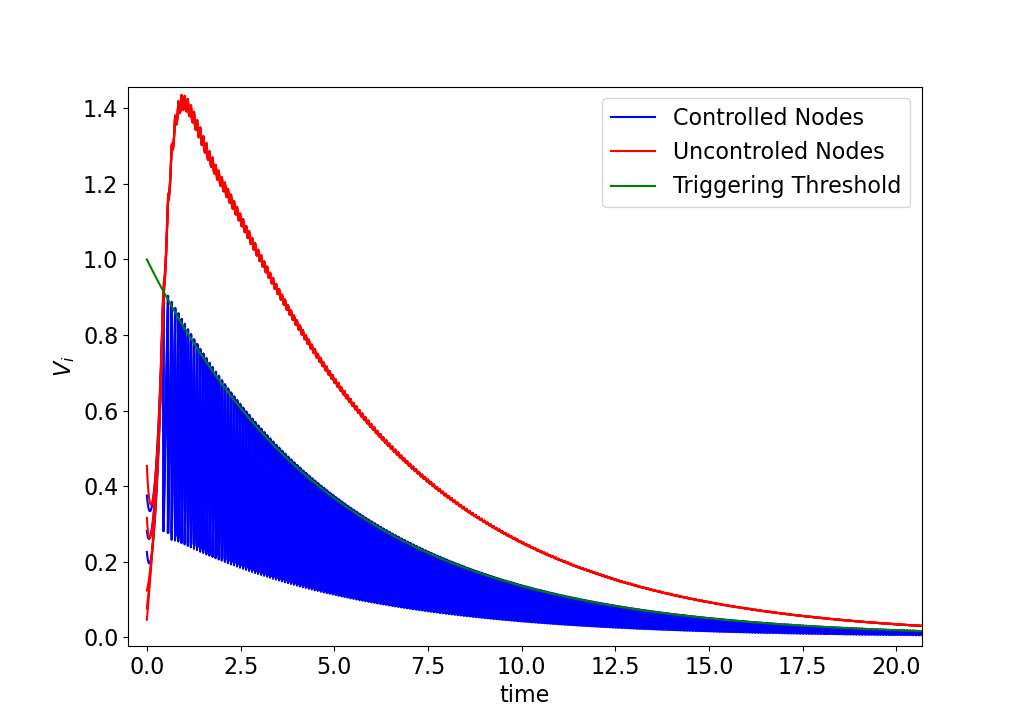}
    \caption{}\label{fig:2ms-delay}
\end{subfigure}
\hfill
\begin{subfigure}{0.48\textwidth}
    \includegraphics[width=\textwidth,trim=0cm 0cm 0cm 2.2cm, clip]{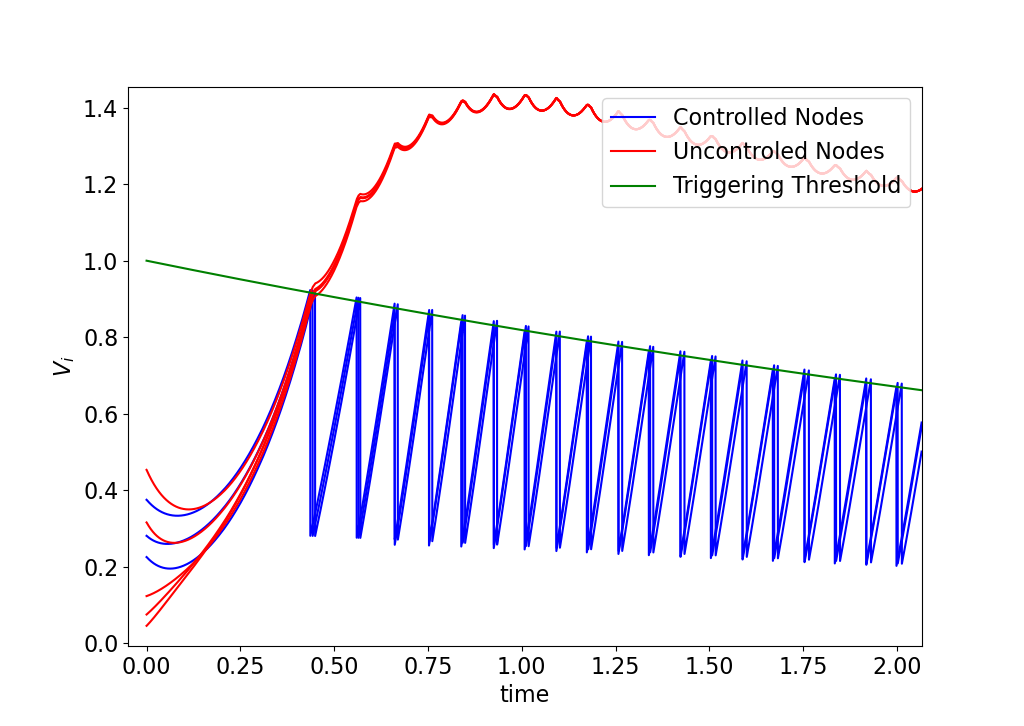}
    \caption{}\label{fig:2ms-delay-zoom}
\end{subfigure}
    \caption{Numerical simulations of a network of Chua circuits with an actuation delay of $\tau = 2\,\mathrm{ms}$. (a) Closed-loop response under the proposed event-triggered impulsive pinning control. The Lyapunov function $V_i$ of each node is shown in blue for pinned nodes and in red for unpinned nodes, while the triggering threshold is shown in green. (b) Enlarged view of panel~(a) over the interval $t \in [0,2]$, highlighting the impulsive state updates of the pinned nodes. Note that after $t=0.5$ only the pinned nodes intersect the triggering threshold, whereas the unpinned nodes remain above it throughout the simulation.} 
\end{figure}

When a larger actuation delay is considered, we can no longer guarantee that the system will be stabilized. 
Increasing the delay from 0.002 to 0.055 while keeping the other simulation parameters the same, we see in Figure \ref{fig:55ms-delay} that the network is not asymptotically stable. 
Specifically, the Lyapunov function for the controlled nodes grows enough during the delay that the impulse does not reduce it back below the triggering threshold and no further impulses are triggered for that node. 

We could also choose larger initial conditions for the unpinned nodes. 
For example, setting $x_i(t_0) = (0.5,0.5,0.5)$ for $i=1, \dots, 3$ and $x_j(t_0) = (4.5,4.5,4.5)$ for $j=4, \dots, 8$ and leaving all other parameters the same we still satisfy the condition that $W(t_0) \le \eta_0$ and $V_i(t_0) < \mu \alpha_i$. 
However, choosing an actuation delay of 0.010, we see in Figure \ref{fig:10ms-delay-large-IC} that the Lyapunov functions for the controlled nodes grow enough during their first delay that the impulse does not reduce them back below the triggering threshold so that no further impulses are triggered and the system is not stabilized. 

{The numerical simulations indicate that the theoretical delay bound provided by Theorem~\ref{thm:main} may be conservative. Therefore, the delay bound derived in this paper should be interpreted as a rigorous analytical guarantee rather than an estimate of the true maximum admissible delay.} % bar

\begin{figure}[htpb]
    \centering
    \begin{subfigure}{0.48\textwidth}
    \includegraphics[width=\textwidth,trim=0cm 0cm 0cm 2.2cm, clip]{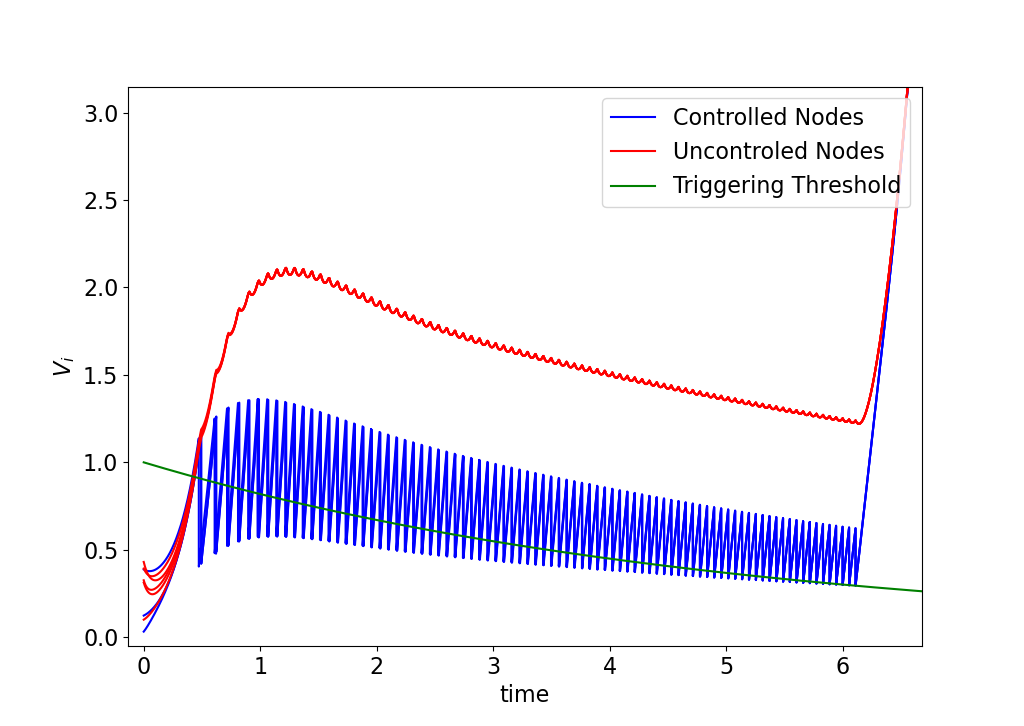}
    \caption{}\label{fig:55ms-delay}
\end{subfigure}
\hfill
\begin{subfigure}{0.48\textwidth}
    \includegraphics[width=\textwidth,trim=0cm 0cm 0cm 2.2cm, clip]{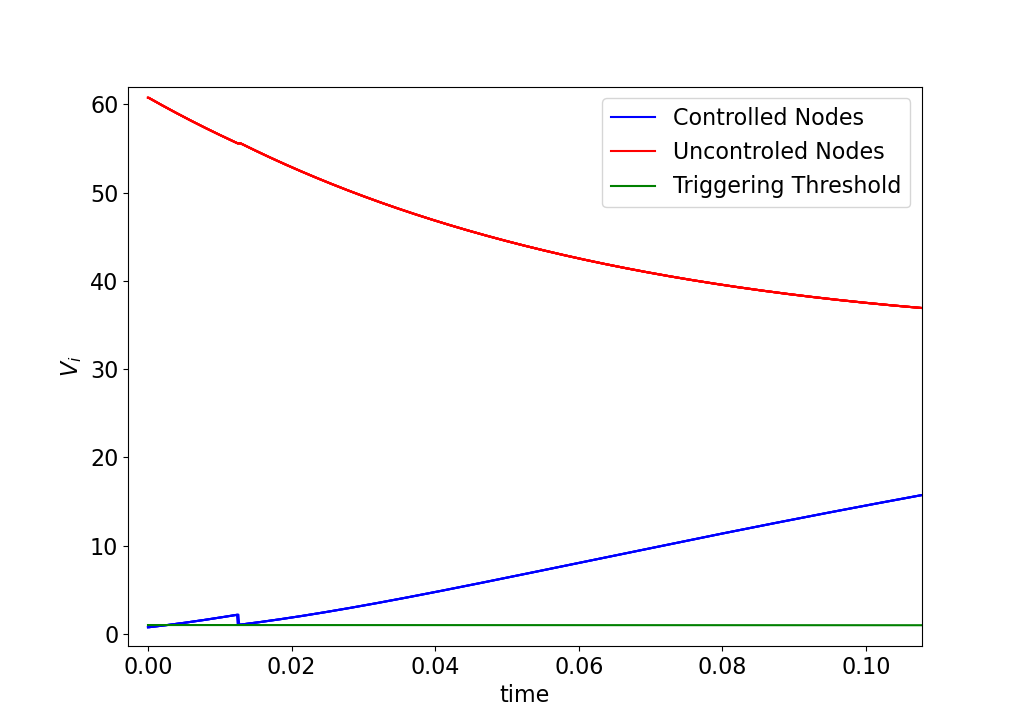}
    \caption{}\label{fig:10ms-delay-large-IC}
\end{subfigure}
    \caption{Numerical simulations of a network of Chua circuits with larger actuation delays. The Lyapunov function $V_i$ of each node is shown in blue for pinned nodes and in red for unpinned nodes, while the triggering threshold is shown in green. (a) Closed-loop response under the proposed event-triggered impulsive pinning control with an actuation delay of $\tau = 55 \mathrm{ms}$. The larger actuation delay results in a loss of stability, and the network fails to converge to the desired equilibrium. Although the pinned nodes continue to satisfy the triggering condition until around the time $t=6.2$, the delayed control actions are no longer sufficient to stabilize the network. (b) Closed-loop response  with an actuation delay of $\tau = 10 \mathrm{ms}$ and larger initial data for the unpinned nodes. The pinned nodes fail to satisfy the triggering condition after one impulse because the delay is too great, so the network is not stabilized. }
\end{figure}

\section{Conclusion}\label{Sec5}

This paper studied the stabilization of complex dynamical networks using event-triggered pinning impulsive control in the presence of actuation delays. Unlike existing works that assume instantaneous control implementation, we explicitly considered delays between event triggering and control execution, which better reflects practical systems. A Lyapunov-based framework was developed to analyze the system behavior under actuation delays. We derived explicit delay-dependent sufficient conditions ensuring asymptotic stability, and established a positive lower bound on inter-event times to exclude Zeno behavior. Numerical results demonstrated the effectiveness of the proposed method.

Future work may consider extending the proposed framework to more general settings, such as networks with heterogeneous node dynamics, stochastic disturbances, or time-varying delays. Another interesting direction is the development of distributed or adaptive event-triggered mechanisms that further reduce communication and computation requirements.

\bibliography{ifac-paper}             % bib file to produce the bibliography with bibtex (preferred)
\end{document}